\theoremstyle{definition}
\newtheorem{lemma}{Lemma}
\theoremstyle{remark}
\newcommand*{\mybox}[1]{%
  \framebox{\raisebox{0cm}[0.5\baselineskip][0.05\baselineskip]{%
    \hbox to 0.1cm {\hss#1\hss}}}\hspace{0.05cm}}
\begin{document}
\title{How to Physically Verify a Rectangle in a Grid: A Physical ZKP for Shikaku}
\author[1]{Suthee Ruangwises\thanks{\texttt{ruangwises@gmail.com}}}
\author[1]{Toshiya Itoh\thanks{\texttt{titoh@c.titech.ac.jp}}}
\affil[1]{Department of Mathematical and Computing Science, Tokyo Institute of Technology, Tokyo, Japan}
\date{}
\maketitle

\begin{abstract}
Shikaku is a pencil puzzle consisting of a rectangular grid, with some cells containing a number. The player has to partition the grid into rectangles such that each rectangle contains exactly one number equal to the area of that rectangle. In this paper, we propose two physical zero-knowledge proof protocols for Shikaku using a deck of playing cards, which allow a prover to physically show that he/she knows a solution of the puzzle without revealing it. Most importantly, in our second protocol we develop a general technique to physically verify a rectangle-shaped area with a certain size in a rectangular grid, which can be used to verify other problems with similar constraints.

\textbf{Keywords:} zero-knowledge proof, card-based cryptography, Shikaku, puzzles, games
\end{abstract}

\section{Introduction}
\textit{Shikaku} is a pencil puzzle introduced by Nikoli, a Japanese publisher that developed many popular pencil puzzles such as Sudoku, Kakuro, and Slitherlink. The puzzle has become popular and many Shikaku mobile apps have been developed \cite{google}. A Shikaku puzzle consists of a rectangular grid of size $m \times n$, with some cells containing a number. The objective of this puzzle is to partition the grid into rectangles such that each rectangle contains exactly one number, which must be equal to the area of that rectangle (see Figure \ref{fig1}). Determining whether a given Shikaku puzzle has a solution is an NP-complete problem \cite{np}.

\begin{figure}
\centering
\begin{tikzpicture}
\draw[step=0.6cm,color={rgb:black,1;white,4}] (0,0) grid (4.2,4.2);

\node at (0.3,0.3) {3};
\node at (1.5,0.3) {6};
\node at (3.3,0.3) {4};
\node at (2.1,0.9) {3};
\node at (2.7,0.9) {2};
\node at (3.3,1.5) {3};
\node at (1.5,2.1) {2};
\node at (3.9,2.1) {4};
\node at (0.3,2.7) {2};
\node at (3.9,2.7) {2};
\node at (0.3,3.3) {2};
\node at (1.5,3.3) {8};
\node at (2.7,3.9) {4};
\node at (3.3,3.9) {2};
\node at (3.9,3.9) {2};
\end{tikzpicture}
\hspace{1.2cm}
\begin{tikzpicture}
\draw[step=0.6cm,color={rgb:black,1;white,4}] (0,0) grid (4.2,4.2);

\draw[line width=0.5mm] (0,0) -- (0,4.2);
\draw[line width=0.5mm] (0.6,0) -- (0.6,4.2);
\draw[line width=0.5mm] (1.8,0) -- (1.8,2.4);
\draw[line width=0.5mm] (2.4,0) -- (2.4,1.8);
\draw[line width=0.5mm] (3,0) -- (3,1.2);
\draw[line width=0.5mm] (3,2.4) -- (3,4.2);
\draw[line width=0.5mm] (3.6,3) -- (3.6,4.2);
\draw[line width=0.5mm] (4.2,0) -- (4.2,4.2);

\draw[line width=0.5mm] (0,0) -- (4.2,0);
\draw[line width=0.5mm] (2.4,1.2) -- (4.2,1.2);
\draw[line width=0.5mm] (0,1.8) -- (4.2,1.8);
\draw[line width=0.5mm] (0.6,2.4) -- (4.2,2.4);
\draw[line width=0.5mm] (0,3) -- (0.6,3);
\draw[line width=0.5mm] (3,3) -- (4.2,3);
\draw[line width=0.5mm] (0.6,3.6) -- (3,3.6);
\draw[line width=0.5mm] (0,4.2) -- (4.2,4.2);

\node at (0.3,0.3) {3};
\node at (1.5,0.3) {6};
\node at (3.3,0.3) {4};
\node at (2.1,0.9) {3};
\node at (2.7,0.9) {2};
\node at (3.3,1.5) {3};
\node at (1.5,2.1) {2};
\node at (3.9,2.1) {4};
\node at (0.3,2.7) {2};
\node at (3.9,2.7) {2};
\node at (0.3,3.3) {2};
\node at (1.5,3.3) {8};
\node at (2.7,3.9) {4};
\node at (3.3,3.9) {2};
\node at (3.9,3.9) {2};
\end{tikzpicture}
\caption{An example of a $7 \times 7$ Shikaku puzzle (left) and its solution (right)}
\label{fig1}
\end{figure}

Suppose that Paimon, an expert in Shikaku, created a difficult Shikaku puzzle and challenged her friend Venti to solve it. After a while, Venti could not solve her puzzle and began to doubt whether the puzzle actually has a solution. Paimon wants to convince him that her puzzle indeed has a solution without revealing it (which would render the challenge pointless). To achieve this, Paimon needs a \textit{zero-knowledge proof (ZKP)}.

\subsection{Zero-Knowledge Proof}
First introduced in 1989 by Goldwasser et al. \cite{zkp0}, a ZKP is an interactive protocol between a prover $P$ and a verifier $V$. Both $P$ and $V$ are given a computational problem $x$, but only $P$ knows a solution $w$ of $x$. A ZKP enables $P$ to convince $V$ that he/she knows $w$ without revealing any information about $w$. It must satisfy the following three properties.
\begin{enumerate}
	\item \textbf{Completeness:} If $P$ knows $w$, then $V$ accepts with high probability. (In this paper, we consider only the \textit{perfect completeness} property where $V$ always accepts.)
	\item \textbf{Soundness:} If $P$ does not know $w$, then $V$ rejects with high probability. (In this paper, we consider only the \textit{perfect soundness} property where $V$ always rejects.)
	\item \textbf{Zero-knowledge:} $V$ learns nothing about $w$. Formally, there exists a probabilistic polynomial time algorithm $S$ (called a \textit{simulator}), not knowing $w$ but having an access to $V$, such that the outputs of $S$ follow the same probability distribution as the ones of the real protocol.
\end{enumerate}

As there exists a ZKP for every NP problem \cite{zkp}, one can construct a computational ZKP for Shikaku. However, such construction requires cryptographic primitives and thus is not intuitive or practical.

Instead, many results so far aimed to develop physical ZKP protocols using a deck of playing cards. These card-based protocols have benefits that they use only portable objects found in everyday life and do not require computers. They also allow external observers to verify that the prover truthfully executes the protocol (which is often a challenging task for digital protocols). In addition, these protocols have great didactic values to teach the concept of a ZKP to non-experts.

\subsection{Related Work}
Card-based ZKP protocols for many other popular pencil puzzles have been developed, including Sudoku \cite{sudoku0,sudoku2,sudoku}, Nonogram \cite{nonogram,nonogram2}, Akari \cite{akari}, Takuzu \cite{akari,takuzu}, Kakuro \cite{akari,kakuro}, KenKen \cite{akari}, Makaro \cite{makaro}, Norinori \cite{norinori}, Slitherlink \cite{slitherlink}, Juosan \cite{takuzu}, Numberlink \cite{numberlink}, Suguru \cite{suguru}, Ripple Effect \cite{ripple}, Nurikabe \cite{nurikabe}, Hitori \cite{nurikabe}, Cryptarithmetic \cite{crypta}, and Bridges \cite{bridges}.

In a recent work of Robert et al. \cite{nurikabe}, the authors posed an open problem to extend the idea of their protocol to verify a solution of Shikaku or other puzzles that require to draw rectangles with certain sizes in a grid.

\subsection{Our Contribution}
In this paper, we answer the open problem posed by Robert et al. \cite{nurikabe} by developing two card-based ZKP protocols with perfect completeness and soundness for Shikaku: a brute force protocol and a more elegant, intuitive \textit{flooding protocol}. The two protocols use $\Theta(m^2n^2)$ cards and $\Theta(mn)$ cards, respectively.

Most importantly, in the flooding protocol we develop a general technique to physically verify a rectangle-shaped area with a certain size in a rectangular grid, which can be used to verify other problems with similar constraints.

\section{First Attempt: Brute Force Protocol}
Every card used in this paper has an integer on the front side. All cards have indistinguishable back sides denoted by \mybox{?}.

Let $(x,y)$ denote a cell located in the $x$-th topmost row and $y$-th leftmost column of the Shikaku grid. Let $p_2,p_3,...,p_{k+1}$ be the $k$ numbers written on the grid\footnote{We intentionally start the indices at 2 so that our second protocol, which will be introduced later, will be easier to understand.}, with each number $p_i$ in a cell $(x_i,y_i)$. Note that we must have $p_2+p_3+...+p_{k+1} = mn$.

Suppose that in $P$'s solution, the grid is divided into $k$ rectangles $Z_2,Z_3,...,Z_{k+1}$ such that each $Z_i$ contains the number $p_i$. Each rectangle $Z_i$ is represented by its top-left and bottom-right corner cells $(a_i,b_i)$ and $(a'_i,b'_i)$, respectively. To verify that the solution is correct, it is sufficient to show that
\begin{enumerate}
	\item $a_i \leq x_i \leq a'_i$ and $b_i \leq y_i \leq b'_i$ (a cell with the number $p_i$ is inside $Z_i$) for every $i \in \{2,3,...,k+1\}$,
	\item $(a'_i-a_i+1)(b'_i-b_i+1) = p_i$ (the area of $Z_i$ is equal to $p_i$) for every $i \in \{2,3,...,k+1\}$, and
	\item $a'_i < a_j$ or $a'_j < a_i$ or $b'_i < b_j$ or $b'_j < b_i$ ($Z_i$ and $Z_j$ do not overlap) for every distinct $i,j \in \{2,3,...,k+1\}$.
\end{enumerate}

These three conditions can be verified by applying the combination of the copy, addition, multiplication, and equality protocols \cite{bridges}, and a protocol to compare two numbers \cite{makaro}.

This protocol, however, involves a lot of messy calculations and thus has lost its didactic values as it becomes more computational and less intuitive. Moreover, it requires up to $\Theta(m^2n^2)$ cards (as we have to multiply integers in modulo $mn$)\footnote{In this protocol, an integer $x$ in modulo $mn$ is encoded by a sequence of $mn$ consecutive cards, with all of them being \mybox{0}s except the $(x+1)$-th leftmost card being a \mybox{1}.}, which is far too many to be practical. Instead, we are looking for an elegant and intuitive protocol that uses a reasonable number of cards.

\section{Verifying an Area of Connected Cells}
In a recent work, Robert et al. \cite{nurikabe} developed a \textit{sea formation protocol} that allows the prover $P$ to convince the verifier $V$ that a given area in a grid consists of $t$ cells that are connected to each other horizontally or vertically. We will first show the necessary subprotocols and then explain the sea formation protocol.

\subsection{Pile-Shifting Shuffle}
Given a $p \times q$ matrix of cards, a \textit{pile-shifting shuffle} rearranges the columns of the matrix by a random cyclic shift, i.e. shifts the columns cyclically to the right by $x$ columns for a uniformly random $x \in \mathbb{Z}/q\mathbb{Z}$, unknown to all parties.

The pile-shifting shuffle was developed by Shinagawa et al. \cite{polygon}. It can be performed in real world by putting the cards in each column into an envelope and then taking turns to apply \textit{Hindu cuts} (taking several envelopes from the bottom and putting them on the top) to the sequence of envelopes \cite{hindu}.

\subsection{Chosen Cut Protocol} \label{chosen}
Given a sequence of $q$ face-down cards $C = (c_1,c_2,...,c_q)$, a \textit{chosen cut protocol} for $q$ cards allows $P$ to select a card $c_i$ he/she wants (to use in other operations) without revealing $i$ to $V$. This protocol also reverts the sequence $C$ back to its original state after $P$ finishes using $c_i$. It was developed by Koch and Walzer \cite{koch}.

\begin{figure}[h]
\centering
\begin{tikzpicture}
\node at (0.0,2.4) {\mybox{?}};
\node at (0.6,2.4) {\mybox{?}};
\node at (1.2,2.4) {...};
\node at (1.8,2.4) {\mybox{?}};
\node at (2.4,2.4) {\mybox{?}};
\node at (3.0,2.4) {\mybox{?}};
\node at (3.6,2.4) {...};
\node at (4.2,2.4) {\mybox{?}};

\node at (0.0,2) {$c_1$};
\node at (0.6,2) {$c_2$};
\node at (1.8,2) {$c_{i-1}$};
\node at (2.4,2) {$c_i$};
\node at (3.0,2) {$c_{i+1}$};
\node at (4.2,2) {$c_q$};

\node at (0.0,1.4) {\mybox{?}};
\node at (0.6,1.4) {\mybox{?}};
\node at (1.2,1.4) {...};
\node at (1.8,1.4) {\mybox{?}};
\node at (2.4,1.4) {\mybox{?}};
\node at (3.0,1.4) {\mybox{?}};
\node at (3.6,1.4) {...};
\node at (4.2,1.4) {\mybox{?}};

\node at (0.0,1) {0};
\node at (0.6,1) {0};
\node at (1.8,1) {0};
\node at (2.4,1) {1};
\node at (3.0,1) {0};
\node at (4.2,1) {0};

\node at (0.0,0.4) {\mybox{?}};
\node at (0.6,0.4) {\mybox{?}};
\node at (1.2,0.4) {...};
\node at (1.8,0.4) {\mybox{?}};
\node at (2.4,0.4) {\mybox{?}};
\node at (3.0,0.4) {\mybox{?}};
\node at (3.6,0.4) {...};
\node at (4.2,0.4) {\mybox{?}};

\node at (0.0,0) {1};
\node at (0.6,0) {0};
\node at (1.8,0) {0};
\node at (2.4,0) {0};
\node at (3.0,0) {0};
\node at (4.2,0) {0};
\end{tikzpicture}
\caption{A $3 \times q$ matrix $M$ constructed in Step 1 of the chosen cut protocol}
\label{fig2}
\end{figure}

\begin{enumerate}
	\item Construct the following $3 \times q$ matrix $M$ (see Figure \ref{fig2}).
	\begin{enumerate}
		\item In Row 1, publicly place the sequence $C$.
		\item In Row 2, secretly place a face-down \mybox{1} at Column $i$ and a face-down \mybox{0} at each other column.
		\item In Row 3, secretly place a face-down \mybox{1} at Column 1 and a face-down \mybox{0} at each other column.
	\end{enumerate}
	\item Apply the pile-shifting shuffle to $M$.
	\item Turn over all cards in Row 2. Locate the position of the only \mybox{1}. A card in Row 1 directly above that \mybox{1} will be the card $c_i$ as desired.
	\item After we finish using $c_i$ in other operations, place $c_i$ back into $M$ at the same position.
	\item Turn over all face-up cards in Row 2 and apply the pile-shifting shuffle to $M$ again.
	\item Turn over all cards in Row 3. Locate the position of the only \mybox{1}. Shift the columns of $M$ cyclically such that this \mybox{1} moves to Column 1. This reverts $M$ back to its original state.
\end{enumerate}

Note that Steps 3 and 6 of this protocol guarantee that the cards in Row 2 and Row 3 are in a correct format (each row having one \mybox{1} and $q-1$ \hbox{\mybox{0}s}).

\subsection{Sea Formation Protocol}
First, publicly place a face-down \mybox{0} on every cell in the Shikaku grid. To handle the case where a selected cell is on the edge of the grid, we publicly place face-down ``dummy cards'' $\mybox{-1}$s around the grid. We now have an $(m+2) \times (n+2)$ matrix of cards (see Figure \ref{fig3}).

\begin{figure}[h]
\centering
\begin{tikzpicture}
\node at (0.6,3.4) {\mybox{?}};
\node at (1.2,3.4) {\mybox{?}};
\node at (1.8,3.4) {\mybox{?}};
\node at (2.4,3.4) {\mybox{?}};
\node at (3.0,3.4) {\mybox{?}};

\node at (0.6,3) {0};
\node at (1.2,3) {0};
\node at (1.8,3) {0};
\node at (2.4,3) {0};
\node at (3.0,3) {0};

\node at (0.6,2.4) {\mybox{?}};
\node at (1.2,2.4) {\mybox{?}};
\node at (1.8,2.4) {\mybox{?}};
\node at (2.4,2.4) {\mybox{?}};
\node at (3.0,2.4) {\mybox{?}};

\node at (0.6,2) {0};
\node at (1.2,2) {0};
\node at (1.8,2) {0};
\node at (2.4,2) {0};
\node at (3.0,2) {0};

\node at (0.6,1.4) {\mybox{?}};
\node at (1.2,1.4) {\mybox{?}};
\node at (1.8,1.4) {\mybox{?}};
\node at (2.4,1.4) {\mybox{?}};
\node at (3.0,1.4) {\mybox{?}};

\node at (0.6,1) {0};
\node at (1.2,1) {0};
\node at (1.8,1) {0};
\node at (2.4,1) {0};
\node at (3.0,1) {0};

\node at (3.0,-0.1) {};

\node at (4.2,2.4) {\LARGE{$\Rightarrow$}};
\end{tikzpicture}
\hspace{0.2cm}
\begin{tikzpicture}
\node at (0.0,4.4) {\mybox{?}};
\node at (0.6,4.4) {\mybox{?}};
\node at (1.2,4.4) {\mybox{?}};
\node at (1.8,4.4) {\mybox{?}};
\node at (2.4,4.4) {\mybox{?}};
\node at (3.0,4.4) {\mybox{?}};
\node at (3.6,4.4) {\mybox{?}};

\node at (0.0,4) {-1};
\node at (0.6,4) {-1};
\node at (1.2,4) {-1};
\node at (1.8,4) {-1};
\node at (2.4,4) {-1};
\node at (3.0,4) {-1};
\node at (3.6,4) {-1};

\node at (0.0,3.4) {\mybox{?}};
\node at (0.6,3.4) {\mybox{?}};
\node at (1.2,3.4) {\mybox{?}};
\node at (1.8,3.4) {\mybox{?}};
\node at (2.4,3.4) {\mybox{?}};
\node at (3.0,3.4) {\mybox{?}};
\node at (3.6,3.4) {\mybox{?}};

\node at (0.0,3) {-1};
\node at (0.6,3) {0};
\node at (1.2,3) {0};
\node at (1.8,3) {0};
\node at (2.4,3) {0};
\node at (3.0,3) {0};
\node at (3.6,3) {-1};

\node at (0.0,2.4) {\mybox{?}};
\node at (0.6,2.4) {\mybox{?}};
\node at (1.2,2.4) {\mybox{?}};
\node at (1.8,2.4) {\mybox{?}};
\node at (2.4,2.4) {\mybox{?}};
\node at (3.0,2.4) {\mybox{?}};
\node at (3.6,2.4) {\mybox{?}};

\node at (0.0,2) {-1};
\node at (0.6,2) {0};
\node at (1.2,2) {0};
\node at (1.8,2) {0};
\node at (2.4,2) {0};
\node at (3.0,2) {0};
\node at (3.6,2) {-1};

\node at (0.0,1.4) {\mybox{?}};
\node at (0.6,1.4) {\mybox{?}};
\node at (1.2,1.4) {\mybox{?}};
\node at (1.8,1.4) {\mybox{?}};
\node at (2.4,1.4) {\mybox{?}};
\node at (3.0,1.4) {\mybox{?}};
\node at (3.6,1.4) {\mybox{?}};

\node at (0.0,1) {-1};
\node at (0.6,1) {0};
\node at (1.2,1) {0};
\node at (1.8,1) {0};
\node at (2.4,1) {0};
\node at (3.0,1) {0};
\node at (3.6,1) {-1};

\node at (0.0,0.4) {\mybox{?}};
\node at (0.6,0.4) {\mybox{?}};
\node at (1.2,0.4) {\mybox{?}};
\node at (1.8,0.4) {\mybox{?}};
\node at (2.4,0.4) {\mybox{?}};
\node at (3.0,0.4) {\mybox{?}};
\node at (3.6,0.4) {\mybox{?}};

\node at (0.0,0) {-1};
\node at (0.6,0) {-1};
\node at (1.2,0) {-1};
\node at (1.8,0) {-1};
\node at (2.4,0) {-1};
\node at (3.0,0) {-1};
\node at (3.6,0) {-1};
\end{tikzpicture}
\caption{The way we place cards on a $3 \times 5$ Shikaku grid during the setup of the sea formation protocol}
\label{fig3}
\end{figure}

Start at the top-left corner of the matrix and pick all cards in the order from left to right in Row 1, then from left to right in Row 2, and so on. Arrange them into a single sequence $D=(d_1,d_2,...,d_{(m+2)(n+2)})$. Note that we know exactly where the four neighbors of any given card are. Namely, the cards on the neighbor to the left, right, top, and bottom of a cell containing $d_i$ are $d_{i-1}$, $d_{i+1}$, $d_{i-n-2}$, and $d_{i+n+2}$, respectively.

The sea formation protocol to verify a connected area with size $t$ works as follows.

\begin{enumerate}
	\item $P$ applies the chosen cut protocol for $(m+2)(n+2)$ cards to select a \mybox{0} that he/she wants to replace.
	\item $P$ reveals the selected card to $V$ that it is a \mybox{0} (otherwise $V$ rejects) and then replaces it with a \mybox{1}.
	\item $P$ repeatedly performs the following steps for $t-1$ iterations.
	\begin{enumerate}
		\item $P$ applies the chosen cut protocol for $(m+2)(n+2)$ cards to select a \mybox{1} he/she wants.
		\item $P$ reveals the selected card to $V$ that it is a \mybox{1} (otherwise $V$ rejects).
		\item $P$ picks the four neighbors of the selected card and applies the chosen cut protocol for four cards to select one of the four neighbors, which is a \mybox{0} that he/she wants to replace.
		\item $P$ reveals the selected neighbor to $V$ that it is a \mybox{0} (otherwise $V$ rejects) and then replaces it with a \mybox{1}.
	\end{enumerate}
\end{enumerate}

We can see that in each iteration, the ``sea'' of \mybox{1}s expands by one cell, while all \mybox{1}s remain connected to each other. Therefore, after $t-1$ steps, $V$ is convinced that there is an area of $t$ \mybox{1}s in the grid that are connected to each other.

\section{Idea to Verify a Rectangle-Shaped Area}
The sea formation protocol, however, does not say anything about the shape of the area. By extending the idea of the sea formation protocol, we propose the following \textit{flooding protocol}, which allows $P$ to convince $V$ that the area is a rectangle with size $t$.

The idea is to always start at the top-left corner of the rectangle. At first, $P$ changes the card on the top-left corner cell of the rectangle from a \mybox{0} to a \mybox{1}. Similarly to the sea formation protocol, in each step $P$ selects a cell with a \mybox{1} and changes the card on one of its neighbor from a \mybox{0} to a \mybox{1}. However, the difference from the sea formation protocol is that $P$ can only select the neighbor to the right or to the bottom (but not to the left or to the top). We call this process a \textit{flood}, which starts at the top-left corner and goes downwards or rightwards in each step until it eventually fills the whole rectangle in $t-1$ steps.

To be more specific, at first the flood can only go downwards (i.e. $P$ can only select the neighbor to the bottom) to fill cells along the left edge of the rectangle. Then, right after it just filled all cells along the left edge, the flood suddenly changes direction and can only go rightwards (i.e. $P$ can only select the neighbor to the right) to fill the rest of the cells in the rectangle. In particular, $V$ must not know the exact time when the flood changes direction (otherwise $V$ will know the height of the rectangle).

The technique to achieve this ``one-time direction change'' is to let $P$ keep a secret variable $r$, which controls the direction of the flood (if $r=0$, then the flood goes downwards; if $r=1$, then the flood goes rightwards). At the beginning, $P$ shows $V$ that $r=0$. Before each step, $P$ secretly chooses whether to add 1 to $r$ or not, then shows $V$ that $r \neq 2$ (without revealing the actual value of $r$). This technique works because while $r=0$, $r$ can become either 0 or 1 in the next step, but once $r$ becomes 1, it must remain 1 forever (see a subprotocol in Section \ref{select} on how to make the selected neightbor depend on the value of $r$).

$P$ performs the above process for $t-1$ times to change all \mybox{0}s in the rectangle to \hbox{\mybox{1}s}. However, the protocol is not finished yet, as $V$ is not yet convinced that the area is a rectangle. In fact, $P$ has only shown that the area has a straight left edge; it may look like one of the shapes in Figure \ref{fig4}.

\begin{figure}[h]
\centering
\begin{tikzpicture}

\draw (0,0) -- (0,2.4);
\draw (0.6,0) -- (0.6,2.4);
\draw (1.2,0) -- (1.2,0.6);
\draw (1.2,1.2) -- (1.2,2.4);
\draw (1.8,0) -- (1.8,0.6);
\draw (1.8,1.2) -- (1.8,1.8);
\draw (2.4,1.2) -- (2.4,1.8);

\draw (0,0) -- (1.8,0);
\draw (0,0.6) -- (1.8,0.6);
\draw (0,1.2) -- (2.4,1.2);
\draw (0,1.8) -- (2.4,1.8);
\draw (0,2.4) -- (1.2,2.4);

\end{tikzpicture}
\hspace{2cm}
\begin{tikzpicture}

\draw (0,0) -- (0,1.8);
\draw (0.6,0) -- (0.6,1.8);
\draw (1.2,0) -- (1.2,1.8);
\draw (1.8,0.6) -- (1.8,1.8);
\draw (2.4,0.6) -- (2.4,1.2);
\draw (3,0.6) -- (3,1.2);

\draw (0,0) -- (1.2,0);
\draw (0,0.6) -- (3,0.6);
\draw (0,1.2) -- (3,1.2);
\draw (0,1.8) -- (1.8,1.8);

\end{tikzpicture}
\caption{Examples of possible shapes with a straight left edge, each with area 10}
\label{fig4}
\end{figure}

To convince $V$ that the area is a rectangle, $P$ needs to perform the ``second flood''. The second flood starts at the bottom-right corner and goes into the cells already visited by the ``first flood'' in the opposite direction from the first flood --- originally the flood can only go upwards (i.e. $P$ can only select the neighbor to the top), then right after it just filled all cells along the right edge, the flood changes direction and can only go leftwards (i.e. $P$ can only select the neighbor to the left).

Formally, $P$ starts at a bottom-right corner of the rectangle and replaces a \mybox{1} with a \mybox{2}. $P$ sets $r=0$ and shows it to $V$. In each step, $P$ secretly chooses whether to add 1 to $r$ or not, then shows $V$ that $r \neq 2$. If $r=0$ (resp. $r=1$), $P$ selects a cell with a \mybox{2} and changes the card on its neighbor to the top (resp. to the left) from a \mybox{1} to a \mybox{2}. $P$ performs this for $t-1$ steps to change all \mybox{1}s in the rectangle to \mybox{2}s.

After the second flood, $P$ have shown that the area also has a straight right edge. This is sufficient to convince $V$ that the area is a rectangle with size $t$ (see the proof of Lemma \ref{lem2} for the full proof of perfect soundness).

In the next section, we will show the necessary subprotocols that enable us to formalize this idea into an actual protocol.

\section{Subprotocols}
\subsection{Addition Protocol for $\mathbb{Z}/3\mathbb{Z}$} \label{add}
We use a sequence of three consecutive cards to encode each integer in $\mathbb{Z}/3\mathbb{Z}$. Namely, we use \mybox{1}\mybox{0}\mybox{0}, \mybox{0}\mybox{1}\mybox{0}, and \mybox{0}\mybox{0}\mybox{1} to encode 0, 1, and 2, respectively.

Suppose we have sequences $R$ and $S$ encoding integers $r$ and $s$ in $\mathbb{Z}/3\mathbb{Z}$, respectively. This protocol, developed by Shinagawa et al. \cite{polygon}, computes the sum $r+s$ without revealing $r$ or $s$.

\begin{enumerate}
	\item Swap the two rightmost cards of $S$. This modified sequence, called $S'$, now encodes $-s$ (mod 3).
	\item Construct a $2 \times 3$ matrix $M$ by placing $S'$ in Row 1 and $R$ in Row 2.
	\item Apply the pile-shifting shuffle to $M$. Note that Row 1 and Row 2 of $M$ now encode $-s+x$ (mod 3) and $r+x$ (mod 3), respectively, for some uniformly random $x \in \mathbb{Z}/3\mathbb{Z}$.
	\item Turn over all cards in Row 1 of $M$. Locate the position of a \mybox{1}. Shift the columns of $M$ cyclically such that this \mybox{1} moves to Column 1.
	\item The sequence in Row 2 of $M$ now encodes $(r+x)-(-s+x) \equiv r+s$ (mod 3) as desired.
\end{enumerate}

Note that Step 4 of this protocol guarantees that $S$ is in a correct format (having one \mybox{1} and two \hbox{\mybox{0}s}). In each step of the flooding protocol, $P$ secretly selects $s \in \{0,1\}$ and places $S$ accordingly. Then, $P$ reveals the rightmost card of $S$ that it is a \mybox{0} to show $V$ that $s \neq 2$. Similarly, after computing the sum $r+s$, $P$ reveals the rightmost card of the resulting sequence to show $V$ that $r+s \neq 2$.

\subsection{Neighbor Selection Protocol} \label{select}
In $\mathbb{Z}/2\mathbb{Z}$, we use \mybox{1}\mybox{0} and \mybox{0}\mybox{1} to encode 0 and 1, respectively. Suppose we have two face-down cards $c_0$ and $c_1$, and a sequence $R$ encoding an integer $r \in \mathbb{Z}/2\mathbb{Z}$. We want to select a card $c_r$ to use in other operations without revealing $r$, and also put $c_0$ and $c_1$ back to where they came from.

We can do so by applying the chosen cut protocol for two cards. However, in Step 1.b, we instead place a sequence $R$ in Row 2 (without revealing $R$). Also, at the end of the chosen cut protocol, $M$ is reverted to its original state, so we can put $c_0$ and $c_1$ back to where they came from.

In each step of the flooding protocol, after showing that $r \neq 2$, $P$ picks only the two leftmost cards of a sequence encoding $r$ in $\mathbb{Z}/3\mathbb{Z}$. This truncated sequence encodes $r$ in $\mathbb{Z}/2\mathbb{Z}$ as desired. During the first flood, $P$ chooses the cards on the neighbor to the bottom and to the right of the selected cell as $c_0$ and $c_1$, respectively; during the second flood, $P$ chooses the cards on the neighbor to the top and to the left of the selected cell as $c_0$ and $c_1$, respectively.

\section{Formal Steps of the Flooding Protocol}
Similarly to the sea formation protocol, we first publicly place a face-down \mybox{0} on every cell in the Shikaku grid, and also place face-down $\mybox{-1}$s around the grid. We now have an $(m+2) \times (n+2)$ matrix of cards (see Figure \ref{fig3}).

Let $h_i = a'_i-a_i+1$ be the height of a rectangle $Z_i$ ($i \in \{2,3,...,k+1\}$). To verify that $Z_i$ is a rectangle with area $p_i$ and also contains a cell with the number $p_i$, $P$ performs the following two phases: the first flood and the second flood.

\subsection{First Flood}
\begin{enumerate}
	\item $P$ applies the chosen cut protocol for $(m+2)(n+2)$ cards to select a card on the top-left corner cell of $Z_i$.
	\item $P$ reveals the selected card to $V$ that it is a \mybox{0} (otherwise $V$ rejects) and then replaces it with a \mybox{1}.
	\item $P$ publicly constructs a sequence $R$ of three cards encoding an integer $r=0$.
	\item $P$ repeatedly performs the following steps for $p_i-1$ iterations.
	\begin{enumerate}
		\item $P$ secretly constructs a sequence $S$ of three cards encoding an integer $s \in \{0,1\}$. If this is the $h_i$-th iteration, $P$ must choose $s=1$; otherwise, $P$ must choose $s=0$.
		\item $P$ reveals the rightmost card of $S$ to $V$ that it is a \mybox{0} to show that $s \neq 2$ (otherwise $V$ rejects).
		\item $P$ applies the addition protocol to compute $r+s$ and reveals the rightmost card of the resulting sequence to $V$ that it is a \mybox{0} to show that $r+s \neq 2$ (otherwise $V$ rejects). From now on, set $r := r+s$.
		\item $P$ applies the chosen cut protocol for $(m+2)(n+2)$ cards to select a \mybox{1} he/she wants from the Shikaku grid. If this is during the first $h_i-1$ iterations, $P$ must choose the bottommost \mybox{1}; otherwise, $P$ may choose any card that is the rightmost \mybox{1} in its row and is not located in the rightmost column of $Z_i$.
		\item $P$ reveals the selected card to $V$ that it is a \mybox{1} (otherwise $V$ rejects).
		\item $P$ chooses the neighbors to the bottom and to the right of the selected card as $c_0$ and $c_1$, respectively, and applies the neighbor selection protocol to select a card $c_r$ (using the two leftmost cards of a sequence encoding $r$ as inputs).
		\item $P$ reveals the selected neighbor to $V$ that it is a \mybox{0} (otherwise $V$ rejects) and then replaces it with a \mybox{1}.
	\end{enumerate}
\end{enumerate}

After the first flood, all cards on the cells in $Z_i$ are now changed to \mybox{1}s.

\subsection{Second Flood}
\begin{enumerate}
	\item $P$ applies the chosen cut protocol for $(m+2)(n+2)$ cards to select a card on the bottom-right corner cell of $Z_i$.
	\item $P$ reveals the selected card to $V$ that it is a \mybox{1} (otherwise $V$ rejects) and then replaces it with an \mybox{$i$}.
	\item $P$ publicly constructs a sequence $R$ of three cards encoding an integer $r=0$.
	\item $P$ repeatedly performs the following steps for $p_i-1$ iterations.
	\begin{enumerate}
		\item $P$ secretly constructs a sequence $S$ of three cards encoding an integer $s \in \{0,1\}$. If this is the $h_i$-th iteration, $P$ must choose $s=1$; otherwise, $P$ must choose $s=0$.
		\item $P$ reveals the rightmost card of $S$ to $V$ that it is a \mybox{0} to show that $s \neq 2$ (otherwise $V$ rejects).
		\item $P$ applies the addition protocol to compute $r+s$ and reveals the rightmost card of the resulting sequence to $V$ that it is a \mybox{0} to show that $r+s \neq 2$ (otherwise $V$ rejects). From now on, set $r := r+s$.
		\item $P$ applies the chosen cut protocol for $(m+2)(n+2)$ cards to select an \mybox{$i$} he/she wants from the Shikaku grid. If this is during the first $h_i-1$ iterations, $P$ must choose the topmost \mybox{$i$}; otherwise, $P$ may choose any card that is the leftmost \mybox{$i$} in its row and is not located in the leftmost column of $Z_i$.
		\item $P$ reveals the selected card to $V$ that it is an \mybox{$i$} (otherwise $V$ rejects).
		\item $P$ chooses the neighbors to the top and to the left of the selected card as $c_0$ and $c_1$, respectively, and applies the neighbor selection protocol to select a card $c_r$ (using the two leftmost cards of a sequence encoding $r$ as inputs).
		\item $P$ reveals the selected neighbor to $V$ that it is a \mybox{1} (otherwise $V$ rejects) and then replaces it with an \mybox{$i$}.
	\end{enumerate}
\end{enumerate}

After the second flood, all cards on the cells in $Z_i$ are now changed to \mybox{$i$}s. Finally, $P$ turns over a card on the cell with the number $p_i$ to show that it is an \mybox{$i$}, i.e. $Z_i$ contains the cell with the number $p_i$ (otherwise $V$ rejects).

$P$ performs the above two phases for every $i \in \{2,3,...,k+1\}$. If all verification steps pass, then $V$ accepts.

The number of cards used in the flooding protocol is $\Theta(mn)$, which is much lower than the brute force protocol.

\section{Proof of Security}
We will prove the perfect completeness, perfect soundness, and zero-knowledge properties of the flooding protocol.

\begin{lemma}[Perfect Completeness] \label{lem1}
If $P$ knows a solution of the Shikaku puzzle, then $V$ always accepts.
\end{lemma}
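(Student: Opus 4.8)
The plan is to show that an honest prover, armed with a correct solution, can make every revealed card match the value that each verification step demands, and that the secret counter $r$ never reaches $2$, so $V$ has no occasion to reject. Since the protocol processes the rectangles $Z_2,\dots,Z_{k+1}$ one at a time, I would first fix an arbitrary $Z_i$ and exhibit an explicit order in which $P$ floods its cells; the claim for all rectangles then follows because the solution is a genuine partition, so the cells of distinct rectangles are disjoint and each cell of $Z_i$ still carries a \mybox{0} (at the start of its first flood) or a \mybox{1} (at the start of its second flood) when $P$ reaches it.

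For the first flood of $Z_i$, whose height is $h_i$ and whose width is $w_i=p_i/h_i$, I would have $P$ fill the cells in the following order: first descend the left edge, turning the $h_i-1$ cells below the top-left corner into \mybox{1}s one by one, and then sweep the rest of the rectangle column by column from left to right, extending each partially filled row rightward. I then check that this order is consistent with every ``must choose'' instruction. During iterations $1,\dots,h_i-1$ the prover picks $s=0$, so $r$ stays at $0$ and the neighbor selection protocol returns the bottom neighbor $c_0$, which lies inside $Z_i$ and is still a \mybox{0}; at the $h_i$-th iteration the prover picks $s=1$, making $r=1$ exactly when the left edge is complete, so from then on the protocol returns the right neighbor $c_1$; during the remaining iterations $s=0$ keeps $r=1$. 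At each rightward step the chosen cell is the rightmost \mybox{1} of its row and lies strictly left of the rightmost column of $Z_i$, so its right neighbor is a still-unfilled cell of $Z_i$ carrying a \mybox{0}. Hence every reveal in Steps (b), (c), (e), and (g) shows the demanded value.

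Since $r$ is incremented by a genuine $1$ exactly once (at the $h_i$-th iteration) and by $0$ otherwise, it runs $0,\dots,0,1,\dots,1$ and never equals $2$; likewise each $s\in\{0,1\}$. Therefore the checks $s\neq 2$ and $r+s\neq 2$ always pass, and by the correctness of the addition, chosen cut, and neighbor selection subprotocols the working matrices are restored to their proper encodings after each use. After the first flood all $p_i$ cells of $Z_i$ are \mybox{1}s. The second flood is the mirror image: starting from the bottom-right corner (now a \mybox{1}, so its reveal passes), $P$ ascends the right edge and then sweeps leftward, replacing \mybox{1}s by \mybox{$i$}s; the identical $r$-bookkeeping shows no rejection occurs, and every revealed card is the \mybox{1} that the step expects because the first flood already placed it there. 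Once all cells of $Z_i$ bear an \mybox{$i$}, the final reveal of the cell containing $p_i$ succeeds, because a correct solution places that cell inside $Z_i$.

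The most delicate point, and the one I would treat most carefully, is verifying that the explicit filling order always leaves a legal choice available at each rightward step (and, in the second flood, at each leftward step) --- that is, that some row always has its rightmost \mybox{1} strictly inside $Z_i$ with an unfilled neighbor --- and that the single direction change is forced to occur at exactly the moment ($r=1$ after $h_i$ cells) that matches the true height of $Z_i$, including the degenerate cases $w_i=1$ (where the $h_i$-th iteration never arises) and $h_i=1$ (where the flood turns rightward immediately). Once this bookkeeping is pinned down for one rectangle, perfect completeness follows by performing the two phases for each $i\in\{2,3,\dots,k+1\}$ in turn.
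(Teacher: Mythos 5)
Your proposal is correct and follows essentially the same route as the paper's own proof: exhibit the honest prover's choices (descend the left edge with $s=0$, switch with $s=1$ at the $h_i$-th iteration, then extend rows rightward), check that every reveal shows the demanded value and that $r$ never reaches $2$, then argue the second flood as the mirror image and conclude over all $i$. Your treatment is in fact slightly more careful than the paper's, since you make the filling order explicit, note the degenerate cases $w_i=1$ and $h_i=1$, and observe that disjointness of the rectangles keeps later floods unobstructed.
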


\begin{proof}
Suppose that $P$ knows a solution of the puzzle. Consider the verification of each $Z_i$.

In the first flood, during the first $h_i-1$ iterations $P$ chooses $s=0$ and chooses the bottommost \mybox{1}, so the area of \mybox{1}s expands downwards by one cell. After $h_i-1$ iterations, all cards along the left edge of $Z_i$ have been changed to \mybox{1}s. In the $h_i$-th iteration, $P$ chooses $s=1$ and chooses any \mybox{1}, so the flood direction is changed to rightwards and the area of \hbox{\mybox{1}s} expands by one cell. After that, in each iteration $P$ chooses $s=0$ and chooses any card that is the rightmost \mybox{1} in its row and is not located in the rightmost column of $Z_i$, so the area of \mybox{1}s expands by one cell inside $Z_i$. Therefore, at the end of the first flood, all cards in $Z_i$ has been changed to \mybox{1}s.

Analogously, in the second flood, during the first $h_i-1$ iterations $P$ chooses $s=0$ and chooses the topmost \mybox{$i$}, so the area of \mybox{$i$}s expands upwards by one cell. After $h_i-1$ iterations, all cards along the right edge of $Z_i$ have been changed to \mybox{$i$}s. In the $h_i$-th iteration, $P$ chooses $s=1$ and chooses any \mybox{$i$}, so the flood direction is changed to leftwards and the area of \mybox{$i$}s expands by one cell. After that, in each iteration $P$ chooses $s=0$ and chooses any card that is the leftmost \mybox{$i$} in its row and is not located in the leftmost column of $Z_i$, so the area of \mybox{$i$}s expands by one cell inside $Z_i$. Therefore, at the end of the first flood, all cards in $Z_i$ has been changed to \mybox{$i$}s, thus a card on the cell containing the number $p_i$ must also be an \mybox{$i$}.

Since the verification passes for every $Z_i$, $V$ always accepts.
\end{proof}

\begin{lemma}[Perfect Soundness] \label{lem2}
If $P$ does not know a solution of the Shikaku puzzle, then $V$ always rejects.
\end{lemma}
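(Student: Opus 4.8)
The plan is to prove the contrapositive: assuming that every verification step passes, so that $V$ does not reject, I will show that the cells $P$ converts under each label genuinely form a partition of the grid into rectangles satisfying conditions 1--3 of the brute force protocol, so that $P$ in fact holds a solution. Since each rectangle is processed in turn, and since after processing $Z_i$ all of its cells carry the label \mybox{$i$} while every cell not yet used remains a \mybox{0}, it suffices to analyse a single pair of floods in isolation and then argue disjointness across the labels.

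Fix $Z_i$. The crucial point is that the repeated ``$r+s \ne 2$'' checks force the secret variable $r$ to remain in $\{0,1\}$; since $r$ is updated by adding $s \in \{0,1\}$ modulo $3$, it is also monotonically non-decreasing, so during the first flood $r$ switches from $0$ to $1$ at most once and can never switch back. Hence every downward flood precedes every rightward flood. I would then show by induction that while $r=0$ the region of \mybox{1}s is forced to be a single contiguous vertical segment in the starting column: a downward flood must reveal a \mybox{0}, and the only \mybox{1} having a \mybox{0} directly below it is the current bottom cell, so the segment can only grow downward. Once $r=1$, a rightward flood must reveal a \mybox{0} immediately to the right of a \mybox{1}, which can only extend an already-occupied row at its right end and can never create a new row. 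Therefore, no matter which cells $P$ secretly selects, the first flood produces a \emph{left-justified} shape --- a contiguous block of rows, each a contiguous horizontal run starting in one common leftmost column --- of total size exactly $p_i$ (the initial cell plus $p_i-1$ floods). This is exactly the ``straight left edge'' guarantee depicted in Figure \ref{fig4}.

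Applying the same argument to the upward/leftward floods, the second flood produces a \emph{right-justified} shape of size $p_i$, each of whose cells is revealed to be a \mybox{1} before becoming an \mybox{$i$}; thus this shape is a subset of the \mybox{1}-region, and since both have exactly $p_i$ cells they coincide. The decisive combinatorial step is to combine the two descriptions of this one region: it occupies the same rows in both, and in each such row it is simultaneously a run starting at a common left column and a run ending at a common right column, which forces every occupied row to be the full interval between them. Hence the region is exactly a rectangle, whose area $p_i$ gives condition 2 and whose extreme corners are the claimed $(a_i,b_i)$ and $(a'_i,b'_i)$.

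Finally I would assemble the global claim. The \mybox{-1} border makes any flood leaving the grid fail a reveal, so every rectangle lies inside the grid; and because each first flood enters only cells that are still \mybox{0}s, whereas previously processed rectangles carry labels different from \mybox{0} and \mybox{1}, the rectangles are pairwise disjoint, giving condition 3. Their areas sum to $\sum_i p_i = mn$, so disjoint rectangles of total area $mn$ inside an $m \times n$ grid must tile it exactly, and the closing reveal that the cell holding $p_i$ shows an \mybox{$i$} gives condition 1. Together, conditions 1--3 mean $P$'s selections form a valid Shikaku solution, contradicting the hypothesis. I expect the main obstacle to be this shape-forcing argument under adversarial play: unlike in the completeness proof, a cheating $P$ need not obey the ``bottommost/rightmost'' selection rule, so I must argue that the reveal checks \emph{alone} pin down the contiguous-segment and row-extension structure, and that matching two equal-size shapes --- one with a straight left edge, one with a straight right edge --- is precisely what upgrades these one-sided guarantees into a genuine rectangle.
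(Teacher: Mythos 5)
Your proposal is correct and follows essentially the same route as the paper's proof: prove the contrapositive, show that the reveal checks alone (not $P$'s honest selection rule) force the first flood to produce a left-justified, straight-left-edge shape, use the second flood to force that same cell set to be a rectangle, and then assemble the global partition argument from disjointness of labels and $\sum_i p_i = mn$. The only cosmetic difference is in the final combination step: you characterize the second flood symmetrically as a right-justified shape and intersect the two descriptions, whereas the paper argues via reachability from the second flood's starting point that all bar lengths $\ell_j$ must equal $\ell_h$ (a contradiction otherwise, since the flood can change direction at most once) --- both versions rest on the identical ``one-time direction change'' observation.
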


\begin{proof}
We will prove the contrapositive of this statement. Suppose that $V$ accepts, meaning that the flooding protocol passes for every $Z_i$. We will prove that $P$ must know a solution.

First, note that the chosen cut protocol in Section \ref{chosen} and the addition protocol in Section \ref{add} guarantee that the inputs from $P$ must be in a correct format. Consider the verification of $Z_i$. Suppose that the first flood goes downwards for $h-1$ steps before changing direction to rightwards. The area that contains \mybox{1}s after the first flood must have a straight left edge with height $h$, and have a shape like $h$ horizontal bars placing on top of each other. Let $\ell_1,\ell_2,...,\ell_h$ be the length of these bars from top to bottom. For example, in Figure \ref{fig5} we have $h=4$, $\ell_1=2$, $\ell_2=4$, $\ell_3=1$, and $\ell_4=3$.

\begin{figure}[h]
\centering
\begin{tikzpicture}
\draw (0,0) -- (0,2.4);
\draw (0.6,0) -- (0.6,2.4);
\draw (1.2,0) -- (1.2,0.6);
\draw (1.2,1.2) -- (1.2,2.4);
\draw (1.8,0) -- (1.8,0.6);
\draw (1.8,1.2) -- (1.8,1.8);
\draw (2.4,1.2) -- (2.4,1.8);

\draw (0,0) -- (1.8,0);
\draw (0,0.6) -- (1.8,0.6);
\draw (0,1.2) -- (2.4,1.2);
\draw (0,1.8) -- (2.4,1.8);
\draw (0,2.4) -- (1.2,2.4);
\end{tikzpicture}
\caption{An example of a possible shape with a straight left edge}
\label{fig5}
\end{figure}

Since all $p_i$ \mybox{1}s in this area have been replaced by \mybox{$i$}s after the second flood, all cells in the area must be reachable from the starting point of the second flood by moving only upwards or leftwards. Thus, the only possible starting point of the second flood is the rightmost cell of the bottommost bar (the one with length $\ell_h$).

Moreover, for any $j<h$, we must have $\ell_j \leq \ell_h$ (otherwise there is a cell in the $j$-th bar which is located to the right of the starting point and thus not reachable by the second flood) However, if $\ell_j < \ell_h$, the second flood cannot go directly from the starting point to the $j$-th bar by only moving upwards; it has to change direction at least twice, a contradiction since the flood can change direction at most once. Therefore, we must have $\ell_j = \ell_h$ for every $j \in \{1,2,...,h-1\}$, which means the area must be a rectangle.

Therefore, $Z_i$ is a rectangle with area $p_i$ that contains a cell with the number $p_i$ for every $i \in \{2,3,...,k+1\}$. Since any two rectangles do not overlap, and $p_2+p_3+...+p_{k+1} = mn$, they must be a partition of the grid. Hence, we can conclude that $P$ knows a valid solution of the puzzle.
\end{proof}

\begin{lemma}[Zero-Knowledge] \label{lem3}
During the verification phase, $V$ learns nothing about $P$'s solution of the Shikaku puzzle.
\end{lemma}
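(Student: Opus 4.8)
The plan is to establish zero-knowledge through the standard simulator paradigm of card-based cryptography: I will exhibit a probabilistic polynomial-time simulator $S$ that, without access to $P$'s solution $w$, outputs a distribution of transcripts identical to the one produced by a genuine accepting execution. Here the transcript (the view of $V$) consists of every card that is ever turned face up during the protocol together with its position, since all cards share the indistinguishable back \mybox{?} and everything else remains hidden. I would first note that the \emph{sequence} of operations — which shuffles are applied, which rows are turned over, and that each flood runs for exactly $p_i-1$ iterations — is fixed by the public data $m,n,k$ and the numbers $p_i$ on their cells $(x_i,y_i)$, and in particular does not depend on the secret heights $h_i$ or corner positions. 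Thus $S$ knows the operation schedule for free, and it remains only to simulate the revealed values.

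The key structural observation is that every reveal falls into exactly one of two categories. The category-one reveals are the single-card value checks (Steps 2, 4.b, 4.c, 4.e, 4.g of each flood and the final check on the numbered cell): on any accepting run these are \emph{forced} to the fixed symbol the step demands — a \mybox{0}, a \mybox{1}, or an \mybox{$i$} — because for instance the rightmost card of a sequence encoding $s\in\{0,1\}$ is a \mybox{0} regardless of whether $s=0$ or $s=1$. Hence $S$ reproduces these deterministically. The category-two reveals are the row-reveals inside the invoked subprotocols: Row 2 and Row 3 of each chosen cut protocol (Section \ref{chosen}) and of each neighbor selection protocol (Section \ref{select}), and Row 1 of each addition protocol (Section \ref{add}). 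For each of these I would invoke the already-established property that the reveal exposes a single \mybox{1} whose position is \emph{uniformly random}, being the image of a secretly placed marker under a pile-shifting shuffle with an unknown uniform cyclic shift; consequently the revealed position is statistically independent of the hidden index, of the bit $s$, and of the direction bit $r$, hence independent of $w$. The simulator therefore draws each such position uniformly and independently.

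The remaining work — and the step I expect to require the most care — is the composition argument that assembles these per-reveal distributions into a single transcript distribution. I would walk through the protocol sequentially, maintaining the invariant that the public state (which cells are currently face up and what symbol each shows) is a deterministic function of the reveal events observed so far, never of $w$. Because each chosen cut, neighbor selection, and addition protocol reverts its auxiliary matrix to a canonical state before the next operation, the shuffles consume fresh randomness and the category-two reveals are mutually independent given the public state, so $S$ may emit them one at a time without ever conditioning on a secret. The subtlety to verify carefully is that the quantities which genuinely influence which physical card gets selected — the hidden chosen-cut index, the bit $s$, and the direction bit $r$ fed into the neighbor selection protocol — are always encoded face down and consumed entirely inside a shuffle, so no reveal event ever depends on them; in particular the moment the flood changes direction (iteration $h_i$) leaves no observable trace. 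Once this independence is confirmed for every subprotocol call and the invariant is propagated across all $p_i-1$ iterations of both floods and across all $i\in\{2,\dots,k+1\}$, the distributions produced by $S$ and by the real protocol coincide, establishing the zero-knowledge property.
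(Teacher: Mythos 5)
Your proof is correct and takes essentially the same approach as the paper's: both exhibit a simulator that reproduces (i) the uniformly random position of the single \mybox{1} revealed after each pile-shifting shuffle in the chosen cut, neighbor selection, and addition subprotocols (your category-two reveals), and (ii) the deterministic pattern of value reveals that depends only on the public $p_i$ (your category one). Your explicit composition and independence argument spells out what the paper leaves implicit, but it is an elaboration of the same decomposition rather than a different route.
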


\begin{proof}
To prove the zero-knowledge property, it is sufficient to show that all distributions of cards that are turned face-up can be simulated by a simulator $S$ that does not know $P$'s solution.

\begin{itemize}
	\item In Steps 3 and 6 of the chosen cut protocol in Section \ref{chosen}, the \mybox{1} has an equal probability to be at any of the $q$ positions, so this step can be simulated by $S$.
	\item In Step 4 of the addition protocol in Section \ref{add}, the \mybox{1} has an equal probability to be at any of the three positions, so this step can be simulated by $S$.
	\item In the flooding protocol, during the verification of each $Z_i$, there is only one deterministic pattern of the cards that are turned face-up. This pattern solely depends on $p_i$, which is public information, so the whole protocol can be simulated by $S$.
\end{itemize}
\end{proof}

\section{Future Work}
We developed a physical ZKP protocol with perfect completeness and soundness for Shikaku using $\Theta(mn)$ cards. Most importantly, we also developed a general technique to physically verify a rectangle-shaped area with a certain size in a rectangular grid.

A possible future work is to develop physical ZKP protocols to verify other geometric shapes or other puzzles with constraints related to shapes (e.g. Shakashaka). Another interesting future work is to develop an equivalent protocol for Shikaku that can be implemented using a deck of all different cards (like the one for Sudoku \cite{sudoku2}).

\end{document}